\newtheorem{assumption}{Assumption}
\newtheorem{definition}{Definition}
\newtheorem{theorem}{Theorem}
\newtheorem{proposition}{Proposition}
\newtheorem{example}{Example}
\definecolor{rougeINRIA}{RGB}{230, 51, 18}
\definecolor{grisbleuINRIA}{RGB}{56,66,87}
\definecolor{orangeINRIA}{RGB}{240, 126, 38}
\definecolor{lilasINRIA}{RGB}{155,0,79}
\definecolor{bleuINRIA}{RGB}{20,136,202}
\definecolor{vertINRIA}{RGB}{149,193,31}
\definecolor{jauneINRIA}{RGB}{255,205,28}
\definecolor{mauveINRIA}{RGB}{101,97,169}
\definecolor{bleuclairINRIA}{RGB}{137,204,202}
\definecolor{vertclairINRIA}{RGB}{199,214,79}
\title{Steering Noncooperative Games Through Conjecture Design}
\author{Francesco Morri\thanks{Corresponding Author: francesco.morri@inria.fr}\\
    Inria, Univ. Lille\\
    CNRS, Centrale Lille
    \And
    Hélène Le Cadre \\
    Inria, Univ. Lille\\
    CNRS, Centrale Lille
    \And
    David Salas\\
    Instituto de Ciencias de la Ingeniería\\
    Universidad de O'Higgins
    \And
    Didier Aussel\\
    Laboratoire PROMESUPR 8521\\
    Université de Perpignan, Tecnosud}
\begin{document}

\maketitle

\begin{abstract}
In dynamic noncooperative games, each player makes conjectures about other players' reactions before choosing a strategy. However, resulting equilibria may be multiple and do not always lead to desirable outcomes. These issues are typically addressed separately — for example, through opponent modelling and incentive design. Drawing inspiration from conjectural variations games, we propose an incentive design framework in which a coordinator first computes an equilibrium by optimizing a predefined objective function, then communicates this equilibrium as a target for the players to reach. In a centralized setting, the coordinator also optimizes the conjectures to steer the players towards the target. In decentralized settings, players independently compute conjectures and update their strategies based on individual targets. We provide a guarantee of equilibrium existence in both cases. This framework uses conjectures not only to guide the system towards desirable outcomes but also to decouple the game into independent optimization problems, enabling efficient computation and parallelization in large-scale settings. We illustrate our theoretical results on classical representative noncooperative games, demonstrating its application potential.
\end{abstract}

\section{Introduction}
Achieving specific equilibria in multi-agent games remains a fundamental challenge, particularly when scaling to complex, high-dimensional environments. Existing solutions either reshape the game (mechanism design), reshape utilities (incentive design), or require online opponent inference. We introduce a conjecture-based modeling approach that synthesizes concepts from the three areas: our framework is grounded in conjectural variations games \cite{figuieres_theory_2004}, where each agent forms structured beliefs, called \textit{conjectures}, about the other players' behaviors and optimizes its strategy accordingly.
This generalization yields a model that is both computationally tractable and broadly applicable across diverse strategic settings.


Mechanism design has long provided a principled way to align agent objectives in structured environments such as markets and auctions \cite{myerson_optimal_1981, daskalakis_complexity_2014}. Traditionally, a central authority sets rules to elicit desirable behaviors. However, as highlighted in \cite{roughgarden_approximately_2019}, the complexity of these mechanisms often limits their practical deployment. In contrast, our framework avoids restrictive assumptions and aims for broad applicability, supporting arbitrary objective functions with limited central coordination.


Incentive design offers an alternative to mechanism design by directly shaping agents' utility functions rather than altering the game structure itself \cite{paccagnan_utility_2022}. Games often lead to multiple (possibly infinite) equilibria \cite{harsanyi_general_1988}, in these settings incentives can guide agents towards a desirable outcome. 
In \cite{ratliff_adaptive_2021}, the authors propose an algorithm for a coordinator adaptively designing incentives by learning the agents' decision-making process (decided by their types), in order to obtain a result that optimize its objective function. For Markov potential games, \cite{mguni_coordinating_2019} addresses the design of incentives that alter players’ reward functions, with theoretical guarantees on optimality.
Recently, \cite{huang_cost_2024} explores the problem of finding the minimum incentive needed to make players change equilibrium, studying bimatrix games.
Our framework generalizes this approach by allowing arbitrary game structures and agents' behaviors.
Finally, \cite{leonardos_exploration-exploitation_2021} examines how to reach a unique equilibrium in large cooperative games using Q-learning by adjusting the learning rate.
While their work frames this as an equilibrium selection problem, we categorize it under incentive design, as selecting a particular equilibrium typically requires first enumerating the set of possible equilibria. Unlike their cooperative setting, our framework assumes no cooperation among agents.

In large-scale multi-agent games, the non-stationarity arising from agents' mutual adaptation poses significant challenges to reaching stable strategies. Opponent modeling addresses this by explicitly incorporating predictions about other players' behavior, into each agent’s learning process \cite{albrecht_autonomous_2018}.
In algorithms such as LOLA \cite{foerster_learning_2018}, an agent optimizes its return under one step look-ahead of opponent learning, assuming access to either the opponents parameters or their actions.
Recent work \cite{duque_advantage_2025} shows that all opponent modeling algorithms can be reduced to a common rule aligning agents, thus assuming some underlying implicit coordination among the agents. In contrast, our approach externalizes the modeling process: agents optimize based on fixed conjectures designed a priori, without requiring information exchange or behavioral assumptions during the course of the game.

Our main contributions can be summarized as follows:
\begin{itemize}
    \item We introduce a conjecture-based framework for $N$-player games that generalizes classical formulations by relaxing consistency constraints and decoupling conjecture design from strategy update.
    \item We prove the existence of consistent conjectures under standard regularity conditions and demonstrate that the resulting equilibria align with designer-specified objectives.
    \item We demonstrate the practical relevance of our framework by applying it to control and coordination games, where it enables players to achieve more efficient outcomes than without conjectures. We consider four games: two games involving two players that can be solved analytically, a large-scale coordination problem, and a saddle game.
\end{itemize}

\paragraph{Notations}
Throughout the paper we use $f(x;\theta)$ to indicate a function with variable $x$, parameterized by $\theta$. When $x = (x_i)_{i\in\mathcal{N}}$, $\nabla_{i} f(x;\theta)$ indicates the partial derivative of $f$ with respect to $x_i,\,\forall i \in \mathcal{N}$, and $\nabla$ represents the total derivative of $f$ with respect to $x$. Further, $\nabla_{-i}f:=\text{col}(\nabla_{j} f)_{j\neq i}$ refers to the stack of the partial derivatives of $f$ with respect to $x_j$ for all $j \in \mathcal{N} \setminus \{i\}$. $f \in \mathcal{C}^k$ is a function of smoothness at least $k$.

\section{Conjectural Models}
We consider a set $\mathcal{N} := \{1,\hdots,N\}$ of $N$ strategic players, each with decision variable $x_i\in\mathcal{X}_i\subset\mathbb{R}^{m_i},\,m_i\in\mathbb{N}^\star,\,m:=\sum_{i\in\mathcal{N}} m_i$, and objective function $J_i:\mathbb{R}^m \rightarrow \mathbb{R}$. Player $i$'s objective function $J_i$ is differentiable in $x_1,\hdots,x_N$. We set $x_{-i}:=(x_j)_{j\in \mathcal{N}\setminus\{i\}}$ and define $x:=(x_i)_{i\in\mathcal{N}} \in \mathcal{X}:=\prod_{i\in\mathcal{N}}\mathcal{X}_i$ as the $N$ players' strategy profile. The interactions between the $N$ players give rise to a noncooperative game $\mathcal{G}:=(\mathcal{N},\mathcal{X},(J_i)_{i\in\mathcal{N}})$.

Each player uses models representing the other players, which are called conjectures. A conjecture is defined as a function $\gamma_i^j:\mathcal{X}_i\to\mathcal{X}_j,\;\forall j \in \mathcal{N}\setminus\{i\}$, parameterized by $\theta_{i,j} \in \Theta_{i,j}\subseteq \mathbb{R}^d,\,d \in \mathbb{N}^\star$. We define $\gamma_i^{-i}(x_i;\theta_i):=(\gamma_i^j(x_i;\theta_{i,j}))_{j\in\mathcal{N}\setminus\{i\}}$ and $\theta:=(\theta_{i,j})_{i,j \in \mathcal{N},\,i\neq j} \in\Theta:=\prod_{i,j\in\mathcal{N},\,i\neq j} \Theta_{i,j}$.

We aim to formulate an incentive design problem, where an $(N+1)$-player can coordinate the $N$ strategic players by optimizing their conjectures, to reach an equilibrium that maximizes a certain objective function, e.g., the social welfare or a fairness criterion. 
To that purpose, we define $\mathscr{F}(\cdot)$ as the coordinator's objective function and $\theta$ as its decision variables.
Interestingly, by optimizing the players' conjectures, the coordinator may reach equilibria that are not solutions of $\mathcal{G}$. Thus, the coordinator may use conjectures to reach equilibria with desirable system-based properties, e.g., economic efficiency, fairness. 
Discrepancies between the conjectures and the actual best-responses are handled through the notion of consistency. 
The definitions of consistency are introduced from the weakest to the strongest. We denote with $x^*$ the solution chosen by the coordinator. The weakest level of consistency is the following:
\begin{definition}[$0$-th order consistency]
Given a strategy profile $x^*$, a vector of conjectures $(\gamma_i^j)_{i,\neq j}$ is $0$-th order consistent at $x^*$ if
\begin{equation}
    J_i(x_i^*,\gamma_i^{-i}(x_i^*;\theta_i)) = J_i(x_i^*, x_{-i}^*),\;\forall i\in\mathcal{N}.
    \label{eq:0_consistency}
\end{equation}
\label{def:0_consistency}
\end{definition}
In Definition~\ref{def:0_consistency}, we require that the objective function of each player in $\mathcal{N}$, evaluated at equilibrium through the conjectures, coincides with the real value at $x^*$. This definition implies that a player observing the conjectured cost of its actions and the real one, would not find any discrepancies, which is fundamental to have players not changing their strategy at equilibrium.

A stronger consistency definition is the following:
\begin{definition}[$1$-st order consistency]
Given a strategy profile $x^*$, a vector of conjectures $(\gamma_i^j)_{i,\neq j}$ is $1$-st order consistent at $x^*$ if
\begin{equation}
    \gamma_i^j(x_i^*;\theta_{i,j}) = x_j^*,\;\forall i,j\in\mathcal{N},i\neq j.
    \label{eq:1_consistency}
\end{equation}
\label{def:1_consistency}
\end{definition}
We notice that Definition~\ref{def:1_consistency} implies Definition~\ref{def:0_consistency}. With this level of consistency each player would still be satisfied with its choice at equilibrium even if it was shown the actual decisions of the other players.

\section{Centralized Conjecture Design}
We formalize the conjecture design problem using Definition~\ref{def:1_consistency}. Thus, weaker definitions of consistency are immediately checked. We obtain the following formulation:
\begin{subequations}
    \begin{alignat}{2}
        & \min_{\theta, x} && \quad \mathscr{F}(x),\label{eq:objective}\\
        & \mbox{s.t. } && \quad \nabla J_i(x_i,\gamma_i^{-i}(x_i;\theta_i)) = 0,\;\forall i\in\mathcal{N},\label{eq:stationarity_cond}\\
        & && \quad \gamma_i^j(x_i;\theta_{i,j}) = x_j,\;\forall i,j\in\mathcal{N},i\neq j,\label{eq:1_consistency_constr}
    \end{alignat}
    \label{eq:conj_opt}
\end{subequations}
where Eq.~\eqref{eq:stationarity_cond} takes the explicit form
\begin{equation}
    \begin{split}
        &\nabla J_i(x_i,\gamma_i^{-i}(x_i;\theta_i)) = \nabla_i J_i(x_i,\gamma_i^{-i}(x_i;\theta_i))+\\
        & \nabla_{-i}J_i(x_i,\gamma_i^{-i}(x_i;\theta_i))^\top\nabla_i \gamma_i^{-i}(x_i;\theta_i) = 0,\forall i\in\mathcal{N}.
    \end{split}
    \label{eq:equilibrium_consistency}
\end{equation}
We focus first on obtaining existence results for problem~\eqref{eq:conj_opt}, showing then how the coordinator can induce the solution of \eqref{eq:conj_opt} to be an equilibrium for the players.
\subsection{Proof of Existence}
In this section we tackle the problem of proving the existence of a solution $(x^*,\theta^*)$ for problem~\eqref{eq:conj_opt}. We start by stating some requirements:
\begin{assumption}
    We assume $\mathscr{F}(x)$ is lower semi-continuous on $\mathcal{X}$, $J_i(x_i,x_{-i})\in\mathcal{C}^1$, $\gamma_i^j(x_i;\theta_{i,j})$ is continuous and $\gamma_i^j(\cdot,\theta_{i,j})\in\mathcal{C}^1,\forall i,j\in\mathcal{N},i\neq j,\forall\theta_{i,j}\in\Theta_{i,j}$. Further we assume $\mathcal{X}_i$ and $\Theta_{i,j}$ are compact for any $i,j\in\mathcal{N}$.
    \label{ass:basic}
\end{assumption}
We define the feasible set $\Omega := \{x\in\mathcal{X},\theta\in\Theta:\mbox{Eq.~\eqref{eq:stationarity_cond} and \eqref{eq:1_consistency_constr} hold}\}$. Noting that $\mathcal{X},\Theta$ are compact, and Eqs.~\eqref{eq:stationarity_cond}-\eqref{eq:1_consistency_constr} are given by continuous functions, the following proposition is immediate:
\begin{proposition}
    Suppose Assumption~\ref{ass:basic} holds. Then, the feasible set $\Omega$ is compact.
    \label{prop:compact_feasible}
\end{proposition}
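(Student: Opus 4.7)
The plan is to show that $\Omega$ is a closed subset of the compact set $\mathcal{X}\times\Theta$, from which compactness follows immediately since closed subsets of compact sets in Euclidean space are compact.

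First I would observe that $\mathcal{X}=\prod_{i\in\mathcal{N}}\mathcal{X}_i$ and $\Theta=\prod_{i\neq j}\Theta_{i,j}$ are finite products of compact Euclidean subsets by Assumption~\ref{ass:basic}, hence compact; consequently $\mathcal{X}\times\Theta$ is compact as the product of two compact sets. Next I would bundle the defining equations \eqref{eq:stationarity_cond} and \eqref{eq:1_consistency_constr} into a single vector-valued map $F(x,\theta)$ and verify that $F$ is continuous on $\mathcal{X}\times\Theta$. For the $1$-st order consistency constraint \eqref{eq:1_consistency_constr}, the components $(x,\theta)\mapsto \gamma_i^j(x_i;\theta_{i,j})-x_j$ are continuous because $\gamma_i^j$ is jointly continuous in $(x_i,\theta_{i,j})$ by Assumption~\ref{ass:basic}. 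For the stationarity constraint \eqref{eq:stationarity_cond}, I would invoke the expansion \eqref{eq:equilibrium_consistency}: since $J_i\in\mathcal{C}^1$, the partial gradients $\nabla_i J_i$ and $\nabla_{-i}J_i$ are continuous; composing them with the continuous inner map $(x_i,\theta_i)\mapsto(x_i,\gamma_i^{-i}(x_i;\theta_i))$ preserves continuity; and the Jacobian factor $\nabla_i \gamma_i^{-i}(x_i;\theta_i)$ is continuous under the $\mathcal{C}^1$ regularity of $\gamma_i^j$. Sums and products of continuous functions are continuous, so $F$ is continuous, and $\Omega=F^{-1}(\{0\})\cap(\mathcal{X}\times\Theta)$ is the intersection of a closed set with a compact set, hence closed in $\mathcal{X}\times\Theta$, and thus compact.

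The only potential snag is the reading of Assumption~\ref{ass:basic}: the $\mathcal{C}^1$ regularity is stated for $\gamma_i^j(\cdot,\theta_{i,j})$ with $\theta_{i,j}$ held fixed, which in isolation does not guarantee joint continuity of $\nabla_i \gamma_i^j$ in $(x_i,\theta_{i,j})$. I would interpret the assumption as implicitly supplying joint continuity of this derivative — consistent with the authors' claim that the proposition is immediate — or otherwise strengthen it to joint $\mathcal{C}^1$ regularity. Under either reading the continuity of $F$, and hence the compactness argument, carries through without modification.
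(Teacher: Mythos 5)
Your proof is correct and follows the same route the paper takes: the paper simply notes that $\mathcal{X}$ and $\Theta$ are compact and that the constraints \eqref{eq:stationarity_cond}--\eqref{eq:1_consistency_constr} are given by continuous functions, so $\Omega$ is a closed subset of a compact set and the result is "immediate." Your additional remark about needing joint continuity of $\nabla_i\gamma_i^j$ in $(x_i,\theta_{i,j})$ is a fair point that the paper glosses over, but it does not change the argument.
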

We now have to prove that the feasible set is nonempty. In order to do so, we start by stating some assumptions on the class of functions for the conjectures:
\begin{assumption}
    For any $x_i\in\mathcal{X}_i,x_j\in\mathcal{X}_j,\alpha_i^j\in\mathbb{R}^{d}$, there exists $\theta_{i,j}\in\Theta_{i,j}$ such that:
    \begin{equation}
        \begin{cases}
            & \nabla_i\gamma_i^j(x_i;\theta_{i,j}) = \alpha_i^j,\\
            & \gamma_i^j(x_i;\theta_{i,j}) = x_j.
        \end{cases}
        \label{eq:conj_conditions}
    \end{equation}
    \label{ass:conjecture_class}
\end{assumption}
We present examples of conjecture classes that satisfy Assumption~\ref{ass:conjecture_class}, demonstrating that this requirement is not too restrictive. In particular, the two cases we discuss correspond to the classical conjecture types commonly used in conjectural games.
\begin{example}[Affine Conjectures]
    Consider the class of conjectures $\gamma_i^j(x_i;\theta_{i,j})=a_{i,j}+b_{i,j}x_i$, then the conditions in Eq.~\eqref{eq:conj_conditions} are satisfied by:
    \begin{equation*}
    \theta_{i,j} = 
        \begin{bmatrix}
            a_{i,j}\\
            b_{i,j}
        \end{bmatrix}
        =
        \begin{bmatrix}
            x_j - \alpha_i^jx_i\\
            \alpha_i^j
        \end{bmatrix}
    \end{equation*}
\end{example}
\begin{example}[Quadratic Conjectures]
    Another example is the class $\gamma_i^j(x_i;\theta_{i,j})=a_{i,j}+b_{i,j}x_i^2$, which results in the parameter choice:
    \begin{equation*}
    \theta_{i,j} = 
        \begin{bmatrix}
            a_{i,j}\\
            b_{i,j}
        \end{bmatrix}
        =
        \begin{bmatrix}
            x_j - (\alpha_i^jx_i)/2\\
            \alpha_i^j/(2x_i)
        \end{bmatrix}
    \end{equation*}
\end{example}
Before stating the main result of this section, we need a final assumption on the objective functions of the players:
\begin{assumption}
    For every player $i\in\mathcal{N}$ there exists at least one point $\hat{x}^i$ where $J_i(\hat{x}^i_i,\hat{x}^i_{-i})=0$ and the partial Jacobian $\nabla_{-i}J_i(\hat{x}^i_i,\hat{x}^i_{-i})$ is full rank, or, equivalently in our formulation, is non-zero. 
    \label{ass:implicit_func}
\end{assumption}
Notice that requiring the equality to 0 is a mild assumption, as it can be any constant value $c$, and we can rescale the function $J_i'(\hat{x}^i_i,\hat{x}^i_{-i})=J_i(\hat{x}^i_i,\hat{x}^i_{-i})-c=0$, since we are interested in working with the derivative of $J_i$, and $\nabla J_i'=\nabla J_i$. Assumption~\ref{ass:implicit_func} is very classical, and it does not restrict the class of problems we can tackle with this framework. We can now state the following result:
\begin{proposition}
    Suppose Assumptions~\ref{ass:conjecture_class} and \ref{ass:implicit_func} hold. Then, the feasible set $\Omega$ is nonempty.
    \label{prop:non_empty}
\end{proposition}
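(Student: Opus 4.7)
The plan is to construct an explicit pair $(\hat{x}, \hat{\theta}) \in \Omega$ by exploiting the flexibility of the conjecture class (Assumption~\ref{ass:conjecture_class}) together with the non-degeneracy at the distinguished points of Assumption~\ref{ass:implicit_func}.

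First, observe that on the consistency locus defined by Eq.~\eqref{eq:1_consistency_constr}, the stationarity condition \eqref{eq:equilibrium_consistency} collapses to the linear matrix equation
\begin{equation*}
(\nabla_i \gamma_i^{-i}(\hat{x}_i;\hat{\theta}_i))^\top\, \nabla_{-i} J_i(\hat{x}) \;=\; -\nabla_i J_i(\hat{x}),\qquad \forall i\in\mathcal{N}.
\end{equation*}
Treating $\alpha_i^{-i}:=\nabla_i \gamma_i^{-i}(\hat{x}_i;\hat{\theta}_i)$ as a free $(m-m_i)\times m_i$ matrix, this is an underdetermined linear system in its entries. Provided $\nabla_{-i} J_i(\hat{x})\neq 0$, the map $M\mapsto M^\top \nabla_{-i} J_i(\hat{x})$ is surjective onto $\mathbb{R}^{m_i}$, so an admissible $\alpha_i^{-i}$ exists; equivalently, the implicit function theorem applied to $J_i$ near the point $\hat{x}^i$ of Assumption~\ref{ass:implicit_func} (using $J_i(\hat{x}^i)=0$ and the rank condition) exhibits a local $C^1$ graph along which $J_i$ vanishes, and whose tangent realizes one such admissible $\alpha_i^{-i}$.

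Second, fix a base profile $\hat{x}\in\mathcal{X}$ at which, for every player $i$, the vector $\nabla_{-i} J_i(\hat{x})$ is nonzero—a requirement that is exactly what Assumption~\ref{ass:implicit_func} is designed to supply. For every pair $(i,j)$ with $i\neq j$, Assumption~\ref{ass:conjecture_class} then yields a parameter $\hat{\theta}_{i,j}\in\Theta_{i,j}$ simultaneously enforcing $\gamma_i^j(\hat{x}_i;\hat{\theta}_{i,j})=\hat{x}_j$ and $\nabla_i\gamma_i^j(\hat{x}_i;\hat{\theta}_{i,j})=\alpha_i^j$, where $\alpha_i^j$ is the appropriate block of the $\alpha_i^{-i}$ constructed in the previous step. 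The consistency constraint \eqref{eq:1_consistency_constr} thus holds by construction, and combined with the first step the stationarity constraint \eqref{eq:stationarity_cond} follows, placing $(\hat{x},\hat{\theta})$ in $\Omega$.

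The main technical obstacle lies in the second step: Assumption~\ref{ass:implicit_func} asserts the existence of a non-degeneracy point $\hat{x}^i$ for each player individually, whereas $\Omega$ is a joint feasibility set requiring a single profile at which the rank condition holds for all players simultaneously. The cleanest remedies are either to read the assumption as guaranteeing a common point, or to perform the above construction locally around each $\hat{x}^i$ via the implicit function theorem and patch the resulting conjectures together, exploiting the compactness of $\mathcal{X}$ and $\Theta$ already recorded in Proposition~\ref{prop:compact_feasible}. Once that reconciliation is in place, the rest of the argument is a direct verification of Eqs.~\eqref{eq:stationarity_cond}--\eqref{eq:1_consistency_constr} from the construction.
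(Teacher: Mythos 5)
Your construction is correct and lands on essentially the same feasible point as the paper, but it justifies the key step by a more elementary route. The paper likewise evaluates the constraints at the point $\hat{x}^i$ of Assumption~\ref{ass:implicit_func} and uses Assumption~\ref{ass:conjecture_class} to match the required value and slope; however, it produces the admissible slope by invoking the generalized implicit function theorem: the rank condition yields a local selection $s_i$ of the level set $\{x_{-i}:J_i(x_i,x_{-i})=0\}$ through $\hat{x}^i$, and since $J_i(x_i,s_i(x_i))\equiv 0$ the total derivative along $s_i$ vanishes, so $\alpha_i^{-i}=\nabla_i s_i(\hat{x}^i_i)$ satisfies Eq.~\eqref{eq:stationarity_cond}. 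You instead solve the linear system $M^\top\nabla_{-i}J_i(\hat{x})=-\nabla_i J_i(\hat{x})$ directly, via surjectivity of $M\mapsto M^\top v$ for $v\neq 0$ (explicitly, $M=-v\,\nabla_i J_i(\hat{x})^\top/\|v\|^2$ works). This is cleaner and, notably, never uses the condition $J_i(\hat{x}^i)=0$ --- only the non-vanishing of $\nabla_{-i}J_i$ --- whereas the paper's level-set argument needs both and in exchange gives the geometric picture of a conjecture tangent to player $i$'s iso-cost surface. Your closing caveat is well taken and is the one genuine weak point: Assumption~\ref{ass:implicit_func} literally supplies one point $\hat{x}^i$ per player, while $\Omega$ requires a single profile $x$ at which Eqs.~\eqref{eq:stationarity_cond}--\eqref{eq:1_consistency_constr} hold for all players simultaneously. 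The paper's own proof evaluates each player's constraints at its own $\hat{x}^i$ and does not reconcile this either, so you have not missed an idea the paper supplies; the reading you should adopt is that of a common non-degeneracy point. Your alternative ``patching'' remedy, however, does not obviously produce a single feasible $(x,\theta)$, and compactness of $\mathcal{X}$ and $\Theta$ plays no role in finding one, so that sentence should be dropped rather than relied upon.
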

\begin{proof}
    We evaluate the constraints \eqref{eq:stationarity_cond} and \eqref{eq:1_consistency_constr} in $\hat{x}^i$, the point given by Assumption~\ref{ass:implicit_func}. If we write \eqref{eq:stationarity_cond} using the explicit response functions of the other players (from the perspective of player $i$), we obtain:
    \begin{equation}
    \begin{split}
            &\nabla J_i(\hat{x}^i_i,x_{-i}(\hat{x}^i_i)) = \nabla_iJ_i(\hat{x}^i_i,x_{-i}(\hat{x}^i_i)) \\
            &+ \nabla_{-i}J_i(\hat{x}^i_i,x_{-i}(\hat{x}^i_i))\nabla_i x_{-i}(\hat{x}^i_i) = 0.
    \end{split}
        \label{eq:real_derivative}
    \end{equation}
    Comparing it with Eq.~\eqref{eq:equilibrium_consistency} and using the consistency constraint in Eq.~\eqref{eq:1_consistency_constr}, we can observe that in order to satisfy this constraint we need only to compare the term $\nabla_i x_{-i}(\hat{x}^i_i)$ to $\nabla_i\gamma_i^{-i}(\hat{x}^i_i;\theta_i)$. If we assume the existence of $\nabla_i x_{-i}(\hat{x}^i_i)$, then we end up in the case covered by Assumption~\ref{ass:conjecture_class}, since both constraints can be framed as in Eq.~\eqref{eq:conj_conditions}, where $\alpha_i^{j}=\nabla_i x_{j}(\hat{x}^i_i), x_{j}=\hat{x}_{j}^i,\;\forall j\in\mathcal{N}, j\neq i$.
    
    Regarding the existence of the term $\nabla_i x_{j}(\hat{x}^i_i)$, we can use the generalized implicit function theorem \cite[Th. 1F.6, Ex. 1F.9]{doncev_implicit_2009}, since in $\hat{x}^i$ Assumption~\ref{ass:implicit_func} is satisfied by construction. Considering $J_i(\hat{x}^i_i,\hat{x}^i_{-i})=0$, we can define the solution mapping:
    \begin{equation*}
        S_i: x_i \to \{x_{-i}\in\mathcal{X}_{-i}|J_i(x_i,x_{-i})=0\}, x_i\in\mathcal{X}_i,
    \end{equation*}
    obtaining $\hat{x}_{-i}^i\in S_i(\hat{x}_i^i)$. Then $S_i$ has a local selection $s_i$ around $\hat{x}_i^i$ for $\hat{x}_{-i}^i$, strictly differentiable in $\hat{x}_i^i$, with Jacobian:
    \begin{equation*}
        \nabla s_i(\hat{x}_i^i) = g(\hat{x}^i)^{\top} \left(g(\hat{x}^i) g(\hat{x}^i)^{\top} \right)^{-1} \nabla_i J_i(\hat{x}^i_i,\hat{x}^i_{-i}),
    \end{equation*}
    where $g(\hat{x}^i)=\nabla_{-i}J_i(\hat{x}^i_i,\hat{x}^i_{-i})$. The local selection $s_i(x_i)$ is exactly what we previously called $x_{-i}(x_i)$, which concludes the proof.
\end{proof}

By the Weierstrass extreme value theorem and under the previous propositions and assumptions, we obtain the following existence result:
\begin{theorem}
    Suppose Assumptions~\ref{ass:basic}-\ref{ass:implicit_func} hold. Then, the conjecture design problem~\eqref{eq:conj_opt} admits at least a  solution.
    \label{th:existence}
\end{theorem}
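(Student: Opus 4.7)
The plan is to recognize that Theorem~\ref{th:existence} is essentially a direct corollary of the two preceding propositions combined with the classical Weierstrass extreme value theorem, as the authors already hint. Almost all of the technical work has been absorbed into Propositions~\ref{prop:compact_feasible} and \ref{prop:non_empty}, so the argument should be short and organizational in nature. I would therefore focus the write-up on unpacking exactly which hypothesis feeds into which ingredient of Weierstrass, rather than redoing any computation.

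The key steps, in order, would be the following. First, I would invoke Proposition~\ref{prop:compact_feasible} to conclude that the feasible set $\Omega\subseteq \mathcal{X}\times\Theta$ is compact; this step uses the compactness of $\mathcal{X}_i$ and $\Theta_{i,j}$ from Assumption~\ref{ass:basic}, together with the fact that \eqref{eq:stationarity_cond} and \eqref{eq:1_consistency_constr} are defined by continuous functions of $(x,\theta)$ (continuity of $\gamma_i^j$, and $\mathcal{C}^1$ regularity of $J_i$, again from Assumption~\ref{ass:basic}). Second, I would invoke Proposition~\ref{prop:non_empty}, which uses Assumptions~\ref{ass:conjecture_class} and \ref{ass:implicit_func}, to conclude that $\Omega\neq\emptyset$. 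Third, I would note that the objective $\mathscr{F}$ is lower semi-continuous on $\mathcal{X}$ by Assumption~\ref{ass:basic}, and hence its restriction to the compact subset $\Omega$ is also lower semi-continuous. Finally, applying the Weierstrass extreme value theorem for lower semi-continuous functions on nonempty compact sets yields the existence of $(x^*,\theta^*)\in\Omega$ attaining $\inf_{\Omega}\mathscr{F}$, which is precisely a solution of problem~\eqref{eq:conj_opt}.

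I do not expect any serious obstacle in this step: the entire difficulty of the existence result lives inside Proposition~\ref{prop:non_empty}, where one has to exhibit a feasible pair $(x,\theta)$ by combining the implicit function theorem (to obtain the local response map $x_{-i}(x_i)$ around the point $\hat{x}^i$ given by Assumption~\ref{ass:implicit_func}) with the flexibility of the conjecture class (Assumption~\ref{ass:conjecture_class}) to match both the value and the derivative of this response map. Once this nonemptiness is in hand, compactness plus lower semi-continuity of $\mathscr{F}$ is straightforward. The only minor care point I would highlight is to check that the constraint set in $(x,\theta)$-space (not just in $x$-space) is what we are extremizing over, so that lower semi-continuity in $x$ extends trivially to lower semi-continuity in $(x,\theta)$ by treating $\mathscr{F}$ as a function independent of $\theta$; this is immediate, but worth writing explicitly to justify the application of Weierstrass to the joint variable.
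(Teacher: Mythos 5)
Your proposal is correct and follows exactly the route the paper itself takes: the paper states Theorem~\ref{th:existence} as an immediate consequence of Proposition~\ref{prop:compact_feasible} (compactness of $\Omega$), Proposition~\ref{prop:non_empty} (nonemptiness of $\Omega$), the lower semi-continuity of $\mathscr{F}$ from Assumption~\ref{ass:basic}, and the Weierstrass extreme value theorem. Your added remark about viewing $\mathscr{F}$ as a function of the joint variable $(x,\theta)$ is a harmless and reasonable clarification, not a deviation.
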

This framework allows us to obtain an existence guarantee with very few steps and minimal assumptions, which is in stark contrast to the framework discussed in \cite{figuieres_theory_2004}. For instance, in \cite{calderone_consistent_2023}, the existence of a consistent equilibrium for a quadratic game with affine conjectures, is linked to the solution of coupled Riccati equations, so even for specific and easier settings (quadratic game), existence is not trivially obtained. In addition, until now we only have Assumptions~\ref{ass:basic}~and~\ref{ass:implicit_func} limiting the choices for each $J_i$, and both assumptions are very mild.

\subsection{Inducing Equilibrium}
Once the solution of problem~\eqref{eq:conj_opt}, $(x^*,\theta^*)$, is obtained, the conjectures parameterized by $\theta^*$ are given to the players, and they solve the problem:
\begin{equation}
    \forall i\in\mathcal{N}: \min_{x_i}J_i(x_i,\gamma_i^{-i}(x_i;\theta_i^*))
    \label{eq:players_problem}
\end{equation}
obtaining the Nash equilibrium $x^N(\theta^*)=\{x_i^N(\theta_i^*)\}_{i\in\mathcal{N}}$. We need now to prove that $x^N(\theta^*)=x^*$. Notice that the choice of constraints to obtain $(x^*,\theta^*)$ does not influence how the players use the conjectures and play the game, hence the following results hold regardless.
For a given $\theta^*$, the KKT conditions of Eq.~\eqref{eq:players_problem} take the form
\begin{equation}
    \begin{split}
        \forall i\in\mathcal{N}&: \nabla_i J_i(x_i,\gamma_i^{-i}(x_i;\theta_i^*)) +\\
        &\nabla_{-i}J_i(x_i,\gamma_i^{-i}(x_i;\theta_i^*))^\top\nabla_i\gamma_i^{-i}(x_i;\theta_i^*) = 0,
    \end{split}
    \label{eq:derivative_conj_obj}
\end{equation}
which just consists in the full derivatives of the conjectured objective functions $J_i(x_i,\gamma_i^{-i}(x_i;\theta_i^*))$. Without any assumption on $J_i(x_i,\gamma_i^{-i}(x_i;\theta_i^*))$, we cannot conclude that a solution of Eq.~\eqref{eq:derivative_conj_obj} is an equilibrium for the optimization problem $\eqref{eq:players_problem}$. We first recall the definition of a pseudo-convex function:
\begin{definition}[Pseudo-Convexity \cite{mangasarian_pseudo-convex_1965}]
    Let $f:\mathcal{X} \subseteq \mathbb{R}^m \rightarrow \mathbb{R}$ be a differentiable function on $\mathcal{X}$. This function is said to be pseudo-convex if
    \begin{equation*}
        \nabla f(a)^\top(b-a)\geq 0 \Rightarrow f(b) \geq f(a), \quad \forall a, b \in \mathcal{X}.
    \end{equation*}
\end{definition}
We can then state our assumption:
\begin{assumption}
    For each player $i\in\mathcal{N}$, given $\theta^* \in \Theta$, the conjectured objective function $J_i(x_i,\gamma_i^{-i}(x_i;\theta_i^*))$ is pseudo-convex in $x_i$.
    \label{ass:pseudo_convex}
\end{assumption}
Notice that the previous assumption only refers to the conjectured functions, meaning that the game $\mathcal{G}$ need not be pseudo-convex. We can show an example where only the conjectured objective function is pseudo-convex, but not the original one:
\begin{example}[Saddle Function]
    Consider $J(x,y)=-xy$, which gives the gradient $\nabla J(x,y)=(-y,-x)$. If we take the points $a=(0,0),b=(1,1)$ and compute the terms in the definition of pseudo-convexity, we obtain:
    \begin{equation*}
        \begin{split}
            \nabla J(a)(b-a) = 0,\\
            J(b)=-1< 0 = J(a).
        \end{split}
    \end{equation*}
    so $J(x,y)$ is not pseudo-convex. If now we consider the conjecture $\gamma_x^y(x) = -x$, we obtain $J(x, \gamma_x^y(x))=-x\gamma_x^y(x)=x^2$, which is convex, so by definition, also pseudo-convex.
\end{example}
We can now state our second key result:
\begin{theorem}
    Given Assumption~\ref{ass:pseudo_convex}, the equilibrium $x^N(\theta^*)$ reached by playing the game described in Eq.~\eqref{eq:players_problem} is equivalent to a solution $x^*$ obtained by coordinator solving Eq.~\eqref{eq:conj_opt}.
    \label{th:equivalence}
\end{theorem}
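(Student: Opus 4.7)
The plan is to show that the coordinator's solution $(x^*,\theta^*)$ is automatically a Nash equilibrium of the decoupled game~\eqref{eq:players_problem}, by matching the constraints of problem~\eqref{eq:conj_opt} with the optimality conditions of each player's best-response problem. The starting observation is that the stationarity constraint~\eqref{eq:stationarity_cond} imposed on the coordinator is, term for term, identical to the first-order condition~\eqref{eq:derivative_conj_obj} that a best-responding player would face, once $\theta^*$ is fixed. So by construction, $x^*$ is a stationary point of each conjectured objective $J_i(\cdot,\gamma_i^{-i}(\cdot;\theta_i^*))$.

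The second step is to upgrade stationarity into global optimality. This is exactly where Assumption~\ref{ass:pseudo_convex} enters: if $f$ is pseudo-convex and $\nabla f(a)=0$, then $\nabla f(a)^\top(b-a)=0\ge 0$ for all admissible $b$, and the pseudo-convexity definition immediately yields $f(b)\ge f(a)$. Applying this player by player to $f(\cdot)=J_i(\cdot,\gamma_i^{-i}(\cdot;\theta_i^*))$, each coordinate $x_i^*$ is a global minimizer of player $i$'s conjectured problem, given the fixed conjecture $\gamma_i^{-i}(\cdot;\theta_i^*)$.

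From here it is a short step to conclude. Since every $x_i^*$ simultaneously minimizes player $i$'s decoupled objective, the profile $x^*$ is a Nash equilibrium of~\eqref{eq:players_problem}, i.e.\ $x^*\in x^N(\theta^*)$. Furthermore, constraint~\eqref{eq:1_consistency_constr} guarantees $\gamma_i^j(x_i^*;\theta_{i,j}^*)=x_j^*$ for all $i\neq j$, so the conjectured values at $x^*$ coincide with the actual strategies, confirming that the equilibrium attained by the players is consistent with the one designed by the coordinator.

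The main subtlety is not a computation but an interpretive one: the conjectured game~\eqref{eq:players_problem} may in principle admit several Nash equilibria, so the conclusion is that $x^*$ is one of them rather than the unique one — which matches the theorem's wording ``equivalent to a solution $x^*$.'' A cleaner uniqueness statement would require strict pseudo-convexity or a contractive best-response mapping, but these stronger hypotheses are not needed for the equivalence claim itself. I expect the write-up to be short, with the pseudo-convexity step carrying essentially all of the mathematical content.
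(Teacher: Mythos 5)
Your proposal is correct and follows essentially the same route as the paper's proof: identify the coordinator's stationarity constraint~\eqref{eq:stationarity_cond} with the players' first-order condition~\eqref{eq:derivative_conj_obj} at fixed $\theta^*$, then invoke pseudo-convexity to promote stationarity to global optimality of each $x_i^*$. Your write-up is in fact slightly more complete than the paper's, since you spell out why $\nabla f(a)=0$ together with pseudo-convexity yields a global minimizer and you flag the possible non-uniqueness of the induced equilibrium, which the paper leaves implicit.
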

\begin{proof}
    A solution $(x^*,\theta^*)$ of the problem in \eqref{eq:conj_opt} solves, by construction, $\nabla J_i(x_i^*,\gamma_i^{-i}(x_i^*;\theta_i^*))$ for each $i\in\mathcal{N}$, which implies that it solves the stationarity equation (Eq.~\eqref{eq:derivative_conj_obj}) associated to problem~\eqref{eq:players_problem}. Then from Assumption~\ref{ass:pseudo_convex}, we guarantee that a solution of Eq.~\eqref{eq:derivative_conj_obj} is a global solution for the optimization of each player $i$.
\end{proof}
The result stated in Theorem~\ref{th:equivalence} is very important for the applicability of our framework: we can now guarantee that if the players solve the game with the conjectures designed by the coordinator, they will converge towards the desired outcome.
In Game~\ref{game:centralized} we describe the timing of the centralized conjecture design problem.

\begin{listing}[tb]%
\caption{\textit{Centralized Conjecture Design}}%
\label{game:centralized}%
\begin{lstlisting}[mathescape, language=, basicstyle=\rmfamily, columns=flexible]
Step 1: Coordinator solves problem (3), obtaining $\theta^*$
Step 2: Each player $i$ optimizes $J_i(x_i,\gamma_i^{-i}(x_i;\theta_i^*))$ independently
Step 3: Obtain Nash equilibrium $x^N(\theta^*)$
\end{lstlisting}
\end{listing}

\subsection{Strengthening Conjecture Consistency}
We propose a third consistency definition, closer to the original version in \cite{figuieres_theory_2004}, which considers directly the best-response function of each player. For player $i$, we refer to its best-response with $x_i(x_{-i})$, overloading the notation for the strategy and expliciting the dependence on the other players strategies. We assume the following:
\begin{assumption}
     The best-response function $x_i(x_{-i})$ exists, is unique, continuous and differentiable for every player $i\in\mathcal{N}$.
    \label{ass:best_response}
\end{assumption}
Assumption~\ref{ass:best_response} is very strong and limiting, which is why we only consider it as an extension of the main model. We can now state the last consistency definition:
\begin{definition}[$2$-nd order consistency]
    Given a strategy profile $x^*$, a vector of conjectures $(\gamma_i^j)_{i,\neq j}$ is $2$-nd order consistent at $x^*$ if
    \begin{equation}
        \nabla_i\gamma_i^j(x_i^*;\theta_{i,j}) = \nabla_ix_j^*(x),\;\forall i,j\in\mathcal{N},i\neq j,
        \label{eq:2_consistency}
    \end{equation}
    \label{def:2_consistency}    
\end{definition}
Definition~\ref{def:2_consistency} does not imply the previous ones, so if used, it must be added on top of either Definition~\ref{def:0_consistency} or \ref{def:1_consistency}. Notice that, if we add Eq.~\eqref{eq:2_consistency} to problem \eqref{eq:conj_opt}, we can use Assumption~\ref{ass:best_response} to directly extend Proposition~\ref{prop:compact_feasible} and \ref{prop:non_empty}. In particular, we no longer need Assumption~\ref{ass:implicit_func} to prove the feasible set is non-empty, since we already assume everything needed on $x_{-i}(x)$. We define the new feasible set $\bar{\Omega}= \{x\in\mathcal{X},\theta\in\Theta:\mbox{Eq.~\eqref{eq:stationarity_cond}, \eqref{eq:1_consistency_constr} and \eqref{eq:2_consistency} hold}\}$, and we can state:
\begin{proposition}
    Under Assumptions~\ref{ass:basic},\ref{ass:conjecture_class} and \ref{ass:best_response}, the conjecture design problem described in \eqref{eq:conj_opt} with the addition of constraint~\eqref{eq:2_consistency} admits at least a solution.    
\end{proposition}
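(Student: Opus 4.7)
The approach mirrors Theorem~\ref{th:existence}: I would verify that $\bar{\Omega}$ is compact and nonempty under Assumptions~\ref{ass:basic},~\ref{ass:conjecture_class} and~\ref{ass:best_response}, then apply the Weierstrass extreme value theorem to the lower semi-continuous $\mathscr{F}$ from Assumption~\ref{ass:basic}.

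For compactness, I would extend Proposition~\ref{prop:compact_feasible}. The new constraint rewrites as $\nabla_i \gamma_i^j(x_i;\theta_{i,j}) - \nabla_i x_j(x) = 0$; the first term is continuous in $(x_i,\theta_{i,j})$ by the $\mathcal{C}^1$-regularity granted in Assumption~\ref{ass:basic}, and the second is continuous in $x$ through the differentiability of best-responses provided by Assumption~\ref{ass:best_response}. Hence the added zero-level set is closed, and $\bar{\Omega}$ is the intersection of this closed set with the compact $\Omega$, so it is itself compact.

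For nonemptiness, I would follow the scheme of Proposition~\ref{prop:non_empty} with Assumption~\ref{ass:best_response} playing the role of the generalized implicit function theorem. Because the reaction maps $x_j(\cdot)$ are globally defined and $\mathcal{C}^1$, the derivatives $\nabla_i x_j(\cdot)$ are available at every profile without invoking Assumption~\ref{ass:implicit_func}. Fix a candidate $x^*\in\mathcal{X}$ satisfying, for every $i\in\mathcal{N}$, the full-derivative optimality $\nabla_i J_i(x^*) + \nabla_{-i} J_i(x^*)^\top \nabla_i x_{-i}(x^*) = 0$. Then, for each pair $(i,j)$ with $i\neq j$, Assumption~\ref{ass:conjecture_class} delivers $\theta_{i,j}^*$ simultaneously enforcing the value target $\gamma_i^j(x_i^*;\theta_{i,j}^*) = x_j^*$ (constraint~\eqref{eq:1_consistency_constr}) and the slope target $\nabla_i \gamma_i^j(x_i^*;\theta_{i,j}^*) = \nabla_i x_j(x^*)$ (constraint~\eqref{eq:2_consistency}). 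Substituting these identities into \eqref{eq:equilibrium_consistency} collapses the conjectured stationarity to the chosen full-derivative condition on $x^*$, so \eqref{eq:stationarity_cond} also holds and $(x^*,\theta^*)\in\bar{\Omega}$.

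The main obstacle is producing a joint $x^*$ that clears the $N$ coupled full-derivative equations simultaneously, rather than the per-player points $\hat{x}^i$ used in Proposition~\ref{prop:non_empty}. I would handle this via a fixed-point argument on the composite best-response map granted by Assumption~\ref{ass:best_response}: since each $x_i(\cdot)$ is continuous and $\mathcal{X}$ is compact, a Brouwer/Kakutani-style argument (together with convexity of $\mathcal{X}$, which the cases of interest satisfy) delivers the required profile. With $\bar{\Omega}$ nonempty and compact in hand, Weierstrass applied to $\mathscr{F}$ concludes the proof.
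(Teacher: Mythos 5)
Your compactness argument and the concluding appeal to Weierstrass match the paper, which treats this proposition as a direct extension of Propositions~\ref{prop:compact_feasible} and \ref{prop:non_empty} (the paper offers no more detail than the claim that Assumption~\ref{ass:best_response} lets one ``directly extend'' them, so your write-up is already more explicit than the source). The genuine problem is in your non-emptiness step. You correctly isolate the crux: once constraint~\eqref{eq:2_consistency} pins $\nabla_i\gamma_i^j$ to $\nabla_i x_j$, the slope is no longer a free parameter that Assumption~\ref{ass:conjecture_class} can tune to absorb the stationarity constraint, so one needs a \emph{single} profile $x^*$ solving the $N$ coupled conditions $\nabla_i J_i(x^*)+\nabla_{-i}J_i(x^*)^\top\nabla_i x_{-i}(x^*)=0$. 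But the repair you propose does not deliver such a profile: a Brouwer/Kakutani fixed point of the composite best-response map is a Nash equilibrium, at which (in the interior) $\nabla_i J_i(x^*)=0$; substituting this into the required condition leaves the residual $\nabla_{-i}J_i(x^*)^\top\nabla_i x_{-i}(x^*)$, which has no reason to vanish. In Proposition~\ref{prop:non_empty} the analogous cancellation worked only because the ``response map'' there was a local selection of the level set $\{J_i=0\}$ supplied by the implicit function theorem, along which the total derivative of $J_i$ is identically zero; the best-response map of Assumption~\ref{ass:best_response} is not a level set of $J_i$, so that mechanism is unavailable. The system you actually need to solve is the consistent-conjectural-equilibrium stationarity system whose difficulty the paper itself emphasizes when citing \cite{figuieres_theory_2004} and \cite{calderone_consistent_2023}.

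Two smaller points. First, Brouwer additionally requires convexity of $\mathcal{X}$, which the paper never assumes and which you import as an unstated hypothesis. Second, closedness of the added constraint set needs joint continuity of $(x_i,\theta_{i,j})\mapsto\nabla_i\gamma_i^j(x_i;\theta_{i,j})$ and continuity of $x\mapsto\nabla_i x_j$, which are slightly stronger than the literal statements of Assumptions~\ref{ass:basic} and \ref{ass:best_response} (the latter only asks for differentiability, not $\mathcal{C}^1$). As written, then, the non-emptiness of $\bar{\Omega}$ is not established by your argument --- nor, to be fair, by the paper's one-line justification.
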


\section{Decentralized Conjecture Design}
Framing the conjecture design as a centralized problem makes it easier to study and characterize the equilibria, but in practical application, players may not trust, or accept, conjectures given by a central entity. Instead, decentralizing the conjectures design, could lead to a more realistic model. The coordinator cannot be completely removed, since to compute the parameters of the conjectures, the players must know what is the new equilibrium $x^*$ to reach. Specifically, the only information that player $i$ needs is the value $J_i(x_i^*,x_{-i}^*)$, which is used as a target. We stress that, from the perspective of player $i$, $J_i(x_i^*,x_{-i}^*)$ is just a target value, so it does not have access to neither $x_i^*$ nor $x_{-i}^*$. We can frame the problem for each player $i$ as:
\begin{equation}
    \begin{cases}
        &\nabla J_i(x_i,\gamma_i^{-i}(x_i;\theta_i)) = 0,\\
        & J_i(x_i,\gamma_i^{-i}(x_i;\theta_i)) - J_i(x_i^*,x_{-i}^*) = 0,
    \end{cases}
    \label{eq:decentralized_problem}
\end{equation}
where player $i$ solves the set of equations for $(x_i,\theta_i)$. If we compare the system of equations \eqref{eq:decentralized_problem} with Eqs.~\eqref{eq:conj_opt}, we can see that each player takes care of the constraints, while the coordinator is left with only the optimization of $\mathscr{F}(x)$ to find $x^*$. We refer to the solution of Eqs.~\eqref{eq:decentralized_problem} as $(\tilde{x}_i,\tilde{\theta}_i),\forall i\in\mathcal{N}$. Since we are dealing with the same constraints as before, we obtain the existence guarantee for free, just considering the same assumptions:
\begin{proposition}
    Given Assumptions~\ref{ass:basic},\ref{ass:conjecture_class} and \ref{ass:implicit_func}, problem \eqref{eq:decentralized_problem} admits a solution $(\tilde{x}_i,\tilde{\theta}_i),\forall i\in\mathcal{N}$.
    \label{prop:existence_decentral}
\end{proposition}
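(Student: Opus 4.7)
The plan is to adapt the existence argument from Proposition~\ref{prop:non_empty} to the decentralized setting by repeating the construction separately for each player $i \in \mathcal{N}$, with the target level set of $J_i$ shifted from $0$ to $c_i := J_i(x_i^*, x_{-i}^*)$. Observe that the system~\eqref{eq:decentralized_problem} decouples across players, and its structural pattern (a stationarity condition coupled to a value-matching condition) is essentially the one already handled by \eqref{eq:stationarity_cond}--\eqref{eq:1_consistency_constr} inside Proposition~\ref{prop:non_empty}, with the $1$-st order consistency on $\gamma$ replaced by the (weaker) $0$-th order consistency on $J_i$ itself. In particular, no new machinery beyond the generalized implicit function theorem and Assumption~\ref{ass:conjecture_class} should be required.

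For a fixed player $i$, I would proceed in three steps. First, I would invoke Assumption~\ref{ass:implicit_func} in its shifted form: as the remark right after that assumption points out, the target value may be any constant because $\nabla J_i = \nabla(J_i - c_i)$, so there exists $\hat{x}^i \in \mathcal{X}$ with $J_i(\hat{x}_i^i,\hat{x}_{-i}^i) = c_i$ and $\nabla_{-i} J_i(\hat{x}_i^i,\hat{x}_{-i}^i)$ of full rank. Second, I would apply the generalized implicit function theorem of~\cite{doncev_implicit_2009}, exactly as in Proposition~\ref{prop:non_empty}, to the solution mapping $S_i(x_i) := \{\,x_{-i} \in \mathcal{X}_{-i} \mid J_i(x_i,x_{-i}) = c_i\,\}$, obtaining a strictly differentiable local selection $s_i$ around $\hat{x}_i^i$ with $s_i(\hat{x}_i^i) = \hat{x}_{-i}^i$ and an explicit Jacobian $\nabla s_i(\hat{x}_i^i)$. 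Third, I would call on Assumption~\ref{ass:conjecture_class} to pick $\tilde{\theta}_i$ such that $\gamma_i^{-i}(\hat{x}_i^i;\tilde{\theta}_i) = s_i(\hat{x}_i^i)$ and $\nabla_i \gamma_i^{-i}(\hat{x}_i^i;\tilde{\theta}_i) = \nabla s_i(\hat{x}_i^i)$, and set $\tilde{x}_i := \hat{x}_i^i$. The value-matching equation then holds by construction, and, expanding the stationarity equation and substituting the matched Jacobian, one obtains
\[
\nabla_i J_i + (\nabla_{-i} J_i)^\top \nabla s_i(\hat{x}_i^i) \;=\; \tfrac{d}{dx_i} J_i(x_i, s_i(x_i))\Big|_{x_i = \hat{x}_i^i} \;=\; 0,
\]
since $J_i(\cdot, s_i(\cdot))$ is locally constant along the level set $\{J_i = c_i\}$.

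The main obstacle, really the only substantive one, is justifying the level shift in Assumption~\ref{ass:implicit_func} from $0$ to $c_i$. The cleanest way to handle this is to apply Assumption~\ref{ass:implicit_func} verbatim to the translated function $J_i' := J_i - c_i$, using $\nabla J_i' = \nabla J_i$ to transfer the full-rank condition back. Once this is in place, the argument above runs independently for each $i \in \mathcal{N}$, so the family $\{(\tilde{x}_i,\tilde{\theta}_i)\}_{i \in \mathcal{N}}$ exists without any further compactness or pseudo-convexity hypothesis, yielding the stated proposition.
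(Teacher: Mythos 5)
Your proposal is correct and follows essentially the same route as the paper, which offers no separate proof for this proposition and simply asserts that the argument of Proposition~\ref{prop:non_empty} (generalized implicit function theorem on the level set of $J_i$, then Assumption~\ref{ass:conjecture_class} to match the selection's value and Jacobian) carries over player by player under the same assumptions. Your explicit handling of the level shift from $0$ to $c_i = J_i(x^*)$ via $J_i' := J_i - c_i$ is exactly the rescaling the paper invokes in the remark following Assumption~\ref{ass:implicit_func}, so you are filling in details the paper leaves implicit rather than taking a different path.
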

We report in Game~\ref{game:decentralized} the timing of the decentralized conjecture design problem.

\begin{listing}[tb]%
\caption{\textit{Decentralized Conjecture Design}}%
\label{game:decentralized}%
\begin{lstlisting}[mathescape, language=, basicstyle=\rmfamily, columns=flexible]
Step 1: Coordinator optimizes $\mathscr{F}(x)$, obtaining $x^*$
Step 2: Coordinator passes to each player $i$ the target $J_i(x^*)$
Step 2: Each player $i$ solves Eqs.(10) independently
Step 3: Obtain solution $(\tilde{x},\tilde{\theta})$
\end{lstlisting}
\end{listing}

\subsection{Discussion: Full Decentralization}
The previous conjecture design model can be seen as semi-decentralized, since the coordinator needs to know $(J_i)_i$ to compute the target values. Ideally, in a fully decentralized model, the coordinator has access only to the predefined objective function $\mathscr{F}(x)$, and it sends a target signal to each player, that we may call $\bar{J}_i$. This signal is not a function, it is a value decided by the coordinator that replaces the real $J_i(x)$ in Eq.~\eqref{eq:decentralized_problem}. Framing the model this way, does not require the coordinator to have any knowledge about the players' problem, but it creates a new issue: how can the coordinator choose the values $(\bar{J}_i)_i$? Indeed, assuming the players compute their conjectures with parameters $\tilde{\theta}$ and their solution strategy $\tilde{x}$, the coordinator can compute how far they are from the intended equilibrium $x^*$, evaluating the difference $\Delta = \mathscr{F}(x^*)-\mathscr{F}(\tilde{x})$. However, the value of $\Delta$ gives no information about how far from $x^*$ \textit{each player} is. 
In symmetric games, $\Delta$-based update rules may work. However, in general, more sophisticated feedback mechanisms need to be defined to allow full conjecture design decentralization. 

\section{Applications}
The game instances we select showcase situations where the players cannot achieve social optimum without external incentives or coordination. On top of that, relying on conjectural games, each player only has to solve its own optimization problem. Each game instance can be solved equivalently relying on either the centralized formulation from Eqs.~\eqref{eq:conj_opt}, or with the decentralized formulation from Eq.~\eqref{eq:decentralized_problem}.
While the two first instances can be solved analytically; the last instance can only be tackled numerically. We leave in the code repository all the scripts needed to reproduce the results and the graphs.

\subsection{Tragedy of the Commons}
Consider the classic example of the tragedy of the commons \cite{hardin_tragedy_1968}. Multiple actors share a common good, of which they can take a part: an actor reward increases with its share of the common resource, but at the same time taking too much of the resource results in a worse outcome for everyone. The original model is supposed to represent industries polluting the environment, where higher production rates increase their revenues while also increasing global pollution, which in turns makes everyone worst off.
Considering two maximizing players, we can write their objective functions as
\begin{equation*}
    J_i(x) = \ln(x_i) + \ln(K-x_i-x_j),
\end{equation*}
where $K$ is a parameter representing the quantity of a common good, while $x_i$ is the share of common good belonging to player $i$. The \textit{tragedy} is encoded in the difference between the Nash equilibrium (NE) and the social optimum (SO): the NE is $x^N_1=x_2^N=K/3$, while the SO is $x_1^S=x_2^S=K/4$. The SO uses less of the common good, nonetheless $J_i(x^S)>J_i(x^N)$ for both players. We can tackle and solve this problem with our conjectures design optimization. We consider linear conjectures $\gamma_i^j(x_i)=a_i+b_ix_i$ and tune the parameters $a_i,b_i$ so that Eqs.~\eqref{eq:stationarity_cond}-\eqref{eq:1_consistency_constr} hold in $(x_1,x_2)=(K/4,K/4)$, obtaining $a_1=a_2=0$ and $b_1=b_2=1$. The conjectured objective function becomes
\begin{equation*}
    J_i(x_i,\gamma_i^j(x_i))=\ln(x_i)+\ln(K-2x_i),
\end{equation*}
and we can observe that $\gamma_i^j(K/4)=K/4;\;i,j=1,2$ as constrained by Eq.~\eqref{eq:1_consistency_constr}, and $\nabla J_i(K/4,\gamma_i^j(K/4))=0$ as constrained by Eq.~\eqref{eq:stationarity_cond}. We can then use Theorem~\ref{th:equivalence} to conclude that two players solving the conjectured game with $J_i(x_i,\gamma_i^j(x_i))$ would converge to the solution $(K/4,K/4)$.

\begin{figure*}[bth]
\begin{subfigure}[t]{.32\linewidth}
    \includegraphics[width=\linewidth]{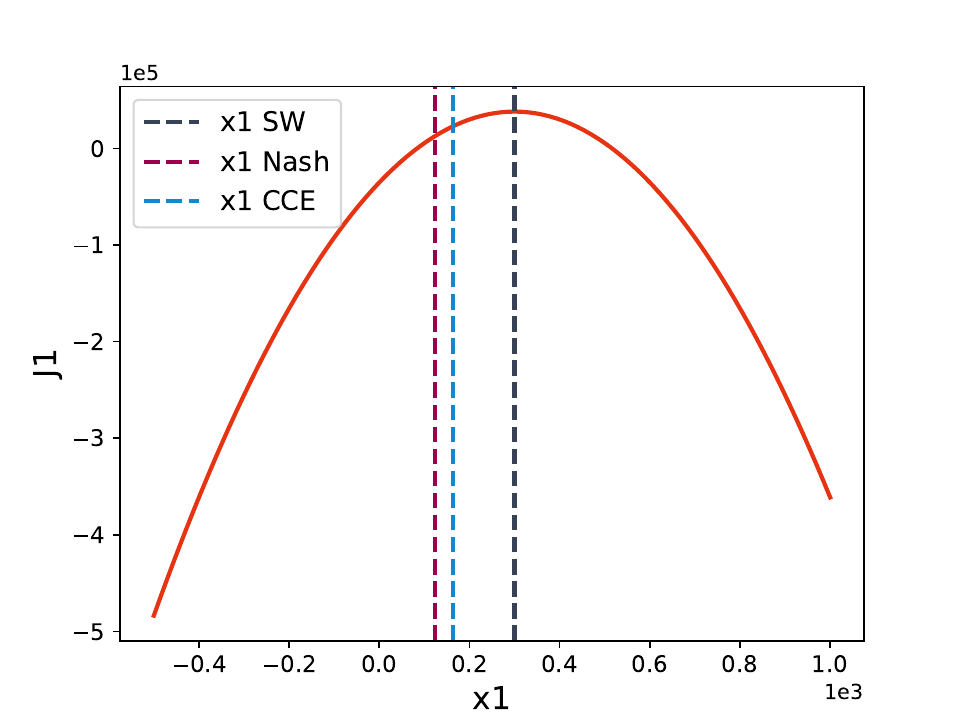}
\caption{Plot of $J_1(x_1,\gamma_1^2(x_1))$ for the Olsder's paradox.}
\label{fig:olsder_conj_j1}
\end{subfigure}
\hfill
\begin{subfigure}[t]{.32\linewidth}
    \includegraphics[width=\linewidth]{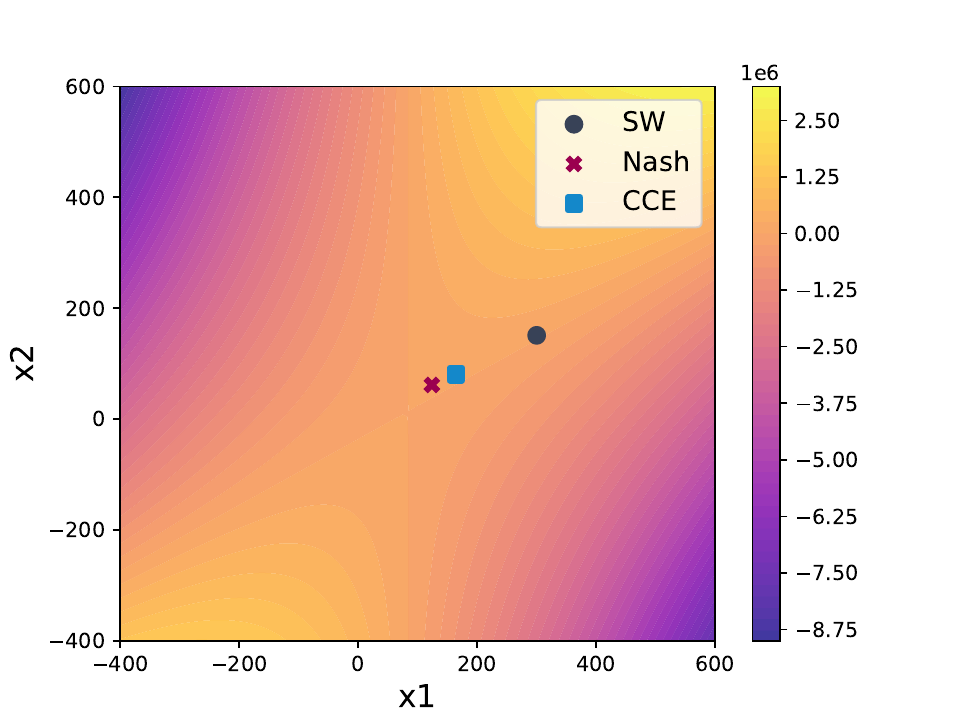}
\caption{Contour plot of the \textit{real} $J_1(x_1,x_2)$.}
\label{fig:olsder_j1}
\end{subfigure}
\hfill
\begin{subfigure}[t]{.32\linewidth}
    \includegraphics[width=\linewidth]{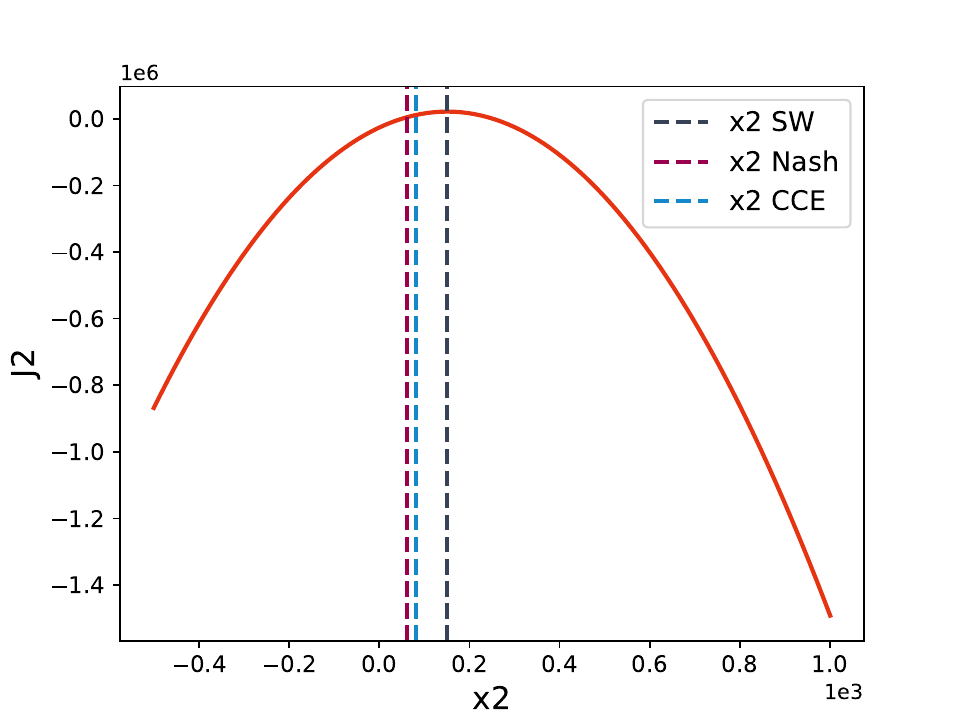}
    \caption{Plot of $J_2(\gamma_2^1(x_2), x_2)$ for the Olsder's paradox.}
\label{fig:olsder_conj_j2}
\end{subfigure}
\caption{We report the conjectured objective functions in Fig.~\ref{fig:olsder_conj_j1}-\ref{fig:olsder_conj_j2}, and the real one for player 1 in Fig.~\ref{fig:olsder_j1}. In all plots we highlight the position of the equilibria.}
\label{fig:olsder}
\end{figure*}

\subsection{Olsder's Paradox}
We consider an interesting example proposed in \cite{olsder_about_2010}, where two maximizing players, with $x_1,x_2\in\mathcal{X}\subset\mathbb{R}$, have the objective functions:
\begin{equation}
    \begin{split}
        &J_1(x_1,x_2)=(x_1-84)(-12.5x_1+21x_2+756)\\
        &J_2(x_1,x_2)=(x_2-50)(24x_1-50x_2+560).
    \end{split}
\end{equation}
This game is interesting because it was built specifically to show how a consistent conjectural equilibrium (in the sense of \cite{figuieres_theory_2004}) produced better payoff that the Nash equilibrium. Specifically, the NE of the game is $(x_1^N,x_2^N)=(123.98,61.6)$, which gives returns $J_1 \simeq  19984, J_2 \simeq 5284$, while the consistent conjectural equilibrium (CCE) is $(x_1^C, x_2^C)=(164.4,81)$, with returns $J_1 \simeq  32321, J_2 \simeq 14124$. The main issue with the CCE is that, in order to obtain it, a rather difficult set of second order equations must be solved, considering the derivative of the best response of each player, which is not a function usually accessible during the game. In contrast, if we apply our framework, we can easily obtain consistent conjectures (in the sense of Eq.~\eqref{eq:1_consistency}), that create an equilibrium in the social welfare optimum (SO) $(x_1^S,x_2^S)=(300.04, 150.98)$, computed maximizing $J_1+J_2$. Solving Eqs.~\eqref{eq:stationarity_cond}-\eqref{eq:1_consistency_constr}, we obtain the linear conjectures:
\begin{equation*}
    \begin{split}
        &\gamma_1^2(x_1) = -15.9704 + 0.5564x_1\\
        &\gamma_2^1(x_2) = -1.2970 + 1.9959x_2,
    \end{split}
\end{equation*}
and we can easily check that $\gamma_1^2(300.04)=150.98$, $\gamma_2^1(150.98)=300.04$. The return obtained by the players using the conjectures is $J_1\simeq38040, J_2\simeq21404$, which is higher than both the NE and CCE. We report a summary of the results in Tab.~\ref{tab:olsder_results} and graphs representing the different equilibrium points and objective functions in Fig.~\ref{fig:olsder}. 
\begin{table}[ht]
  \centering
  \begin{tabular}{lcc}
    \toprule
    Equilibrium & $J_1$ & $J_2$ \\
    \midrule
    NE    & 19984     & 5284    \\
    CCE   & 32321     & 14124   \\
    SO   & $\boldsymbol{38040}$     & $\boldsymbol{21404}$    \\
    \bottomrule
  \end{tabular}
    \caption{Results from Olsder's Paradox}
    \label{tab:olsder_results}
\end{table}

\subsection{Coordination Game}
We now consider a large-scale instance, that we solve numerically. The players' utility functions have a common element and a user's specific cost:
\begin{equation}
    J_i(x) = -a_i\left(\frac{1}{N}\sum_{j\in\mathcal{N}}x_j-\frac{1}{N}\sum_{j\in\mathcal{N}}d_j\right)^2 - b_i(x_i-d_i),
    \label{eq:coord_obj_2}
\end{equation}
where $x_i\geq0,a_i>0,b_i\geq0$. The first term in $J_i$ represents a common target, while the second one is the player's personal target. This game admits a single symmetric Nash equilibrium if $a_i=a,b_i=b,\forall i\in\mathcal{N}$, which is $x_i^N=\frac{1}{N}\sum_{j\in\mathcal{N}}d_j - \frac{Nb}{2a}$ \cite{s_how_2025}. Further, there exists a single social optimum for the symmetric case, which is $x_i^S=\frac{1}{N}\sum_{j\in\mathcal{N}}d_j - \frac{b}{2a}$. We know there are no asymmetric solution because we would have to solve a system of equations of the form:
\begin{equation*}
    \begin{cases}
        x_1+x_2+\dots+x_N=&A_1\\
        \vdots &\vdots\\
        x_1+x_2+\dots+x_N=&A_N
    \end{cases}
\end{equation*}
which has no solution. In this situation our framework has a double goal: in the asymmetric case we can induce an equilibrium and in the symmetric case we can improve over the Nash. Moreover, we can compare the centralized and decentralized approach. Notice that since there is no NE, in the asymmetric case we lose the guarantee of existence of a solution.
In Tab.~\ref{tab:coord_results} we report the results for both cases evaluated with different amount of players. Each entry in the table refers to the algorithm performing the best: in the symmetric case we also compare against the NE, while in the asymmetric one just between our solutions. For a fixed number of players, the algorithms are compared over the same parameter instance $\{(a_i)_{i\in\mathcal{N}}, (b_i)_{i\in\mathcal{N}}, (d_i)_{i\in\mathcal{N}}\}$. We refer to the algorithm solving problem~\eqref{eq:conj_opt} as \texttt{CS}, to the one replacing constraint~\eqref{eq:1_consistency_constr} with Eq.~\eqref{eq:0_consistency} as \texttt{CW}, and to the decentralized one with \texttt{D}.
We can immediately see that for any $N$ in the symmetric case, we find a solution that is better than the NE. The centralized algorithms are better for lower number of players, while the decentralized one produces the best solutions for the larger instances. Moreover, in many of the asymmetric cases, the centralized approach could not produce an optimal solution, and in some of the largest instances, it could not find a solution at all, while the decentralized always worked.
In conclusion, the conjectural design algorithm yields an equilibrium even when no Nash equilibrium exists and improves equilibrium outcomes when one is available.
\begin{table*}[bth]
  \centering
  \begin{tabular}{lccccccc}
    \toprule
    Settings & $N=2$ & $N=5$ & $N=10$ & $N=15$ & $N=20$ & $N=30$ & $N=50$\\
    \midrule
    Symmetric & \texttt{CW} & \texttt{CW} & \texttt{CS} & \texttt{CS} & \texttt{D} & \texttt{D} & \texttt{D} \\
    Asymmetric & \texttt{CS} & \texttt{CS} & \texttt{CS} & \texttt{D} & \texttt{D} & \texttt{D} & \texttt{D}  \\
    \bottomrule
  \end{tabular}
  \caption{Best algorithm from the Coordination Game.}
  \label{tab:coord_results}
\end{table*}

\section{Discussion and Limitation: Saddle Games}
We now discuss a class of games that requires the stronger consistency proposed in Def.~\ref{def:2_consistency} in order to be solved with our framework. We consider zero-sum saddle games, which we can write as:
\begin{equation}
    J_1(x_1,x_2)=-J_2(x_1,x_2)=(x_1-\bar{x}_1)(x_2-\bar{x}_2),
    \label{eq:saddle_game}
\end{equation}
with both players maximizing. Here the only Nash equilibrium is in $(\bar{x}_1,\bar{x}_2)$, which is saddle point for both players, meaning that is generally complicated to reach. We define the coordinator objective function $\mathscr{F}(x)= J_1J_2$, and without considering $2$-nd order consistency, we obtain the following optimization problem:
\begin{subequations}
    \begin{alignat}{2}
        &\max_x &&(x_1-\bar{x}_1)^2(x_2-\bar{x}_2)^2\label{eq:saddle_select}\\
        &\mbox{s.t. } &&(\gamma_1^2(x_1)-\bar{x}_2) + \nabla_1 \gamma_1^2(x_1)(x_1-\bar{x}_1)=0\label{eq:saddle_c1}\\
        & && (\gamma_2^1(x_2)-\bar{x}_1) + \nabla_2 \gamma_2^1(x_2)(x_2-\bar{x}_2) = 0\label{eq:saddle_c2}\\
        & && \gamma_1^2(x_1) = x_2\label{eq:saddle_c3}\\
        & && \gamma_2^1(x_2) = x_1\label{eq:saddle_c4}
    \end{alignat}
\end{subequations}
When we evaluate constraints~\eqref{eq:saddle_c1}-\eqref{eq:saddle_c2} in $(\bar{x}_1,\bar{x}_2)$, with conjectures satisfying constraints~\eqref{eq:saddle_c3}-\eqref{eq:saddle_c4}, they are always satisfied.
If we consider affine conjectures $\gamma_1^2(\bar{x}_1) = a_1 + b_1\bar{x}_1$, it means that we can choose any value for $b_1$ and then just fix $a_1=\bar{x}_2-b_1\bar{x}_1$. This can create issues, for instance the conjectures $\gamma_1^2(x_1)=x_1 + \bar{x}_2-\bar{x}_1, \gamma_2^1(x_2)=-x_2+\bar{x}_1+\bar{x}_2$ satisfy all constraints, but produce the conjectured objective functions:
\begin{equation}
    \begin{split}
        &J_1^{\gamma}(x_1) = x_1^2 + (\bar{x}_2-\bar{x}_1)x_1,\\
        &J_2^{\gamma}(x_2) = x_2^2 + (\bar{x}_1+\bar{x}_2)x_2,
    \end{split}
\end{equation}
which in maximization settings are problematic, being convex. Adding the $2$-nd order consistency constraint solves this problem. For the saddle game~\eqref{eq:saddle_game}, Eq.~\eqref{eq:2_consistency} constrains the slopes of the affine conjectures to be $b_1 = -1, b_2 = 1$, so we obtain the functions: $\gamma_1^2(x_1) = -x_1 + \bar{x}_2 + \bar{x}_1, \gamma_2^1(x_2) = x_2 + \bar{x}_1-\bar{x}_2$. The conjectured objective functions are now:
\begin{equation}
    \begin{split}
        &J_1^{\gamma}(x_1) = -x_1^2 + (\bar{x}_2+\bar{x}_1)x_1,\\
        &J_2^{\gamma}(x_2) = -x_2^2 - (\bar{x}_1-\bar{x}_2)x_2,
    \end{split}
    \label{eq:saddle_conj_objective}
\end{equation}
which are good solutions for the game, since they are concave with a maximum in $(\bar{x}_1,\bar{x}_2)$ for $J_1,J_2$ respectively.
\begin{figure}[h]
    \centering
    \includegraphics[width=0.7\linewidth]{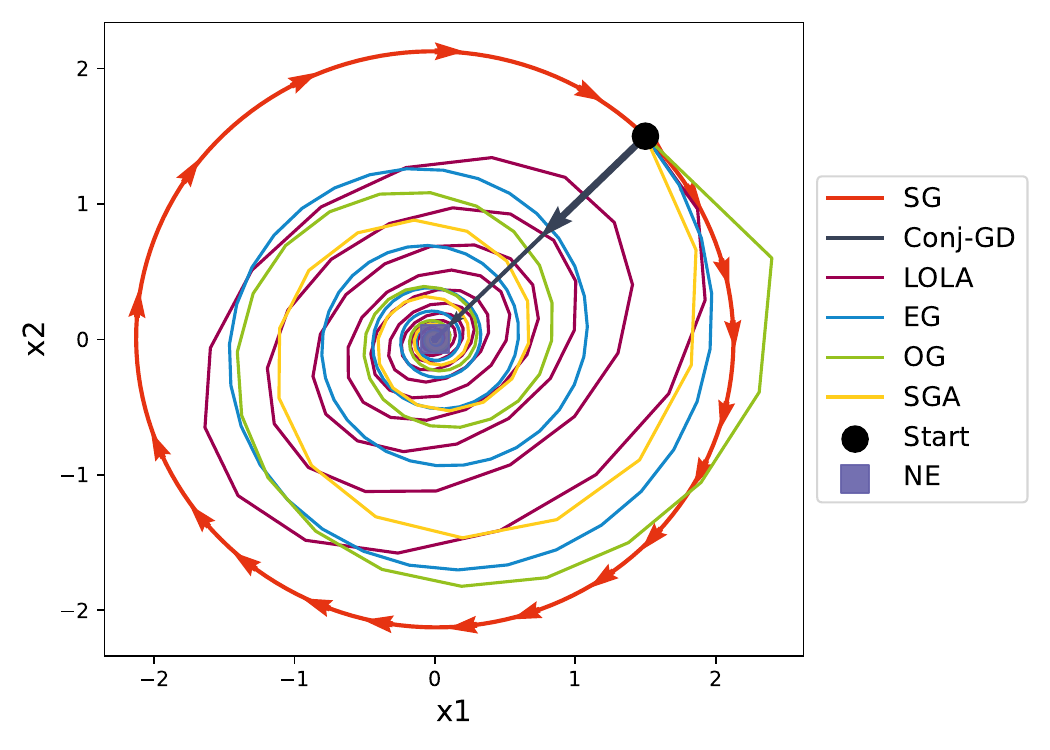}
    \caption{Comparison of learning a strategy in the saddle game for different algorithms. Each one is ran for $10^3$ steps with different learning steps.}
    \label{fig:saddle_compare}
\end{figure}
In Fig.~\ref{fig:saddle_compare} we report a comparison between different algorithms used to learn a strategy in a saddle game with $\bar{x}_1=\bar{x}_2=0$. \textit{Conj-GD} is gradient ascent on the conjectured objective functions in \eqref{eq:saddle_conj_objective}, while \textit{SG} (simultaneous gradient), \textit{LOLA} (learning with opponent-learning awareness), \textit{EG} (extragradient), \textit{OG} (optimistic gradient) and \textit{SGA} (symplectic gradient adjustment) are benchmark algorithms taken from \cite{martin_approxed_2025}, from which we also took the example of the saddle game. For each algorithm we choose hyperparameters that ensured convergence, which we report in the code repository. When good conjectures are available, our framework enables extremely fast convergence to the game's equilibrium, with the added benefit that each player's learning process can be parallelized.

\section{Conclusions}
In this work we propose a novel framework unifying concepts from mechanism, incentive design, and opponent modeling; grounded in conjectural variations games. We derive theoretical guarantees for the existence of conjectures satisfying different levels of consistency constraints, under standard regularity assumptions about the game. This is in stark contrast to previous results on conjectural variations games, where obtaining proof of existence for conjectures was extremely complicated, and needed strong assumptions.

We consider both a centralized and a decentralized approach for designing the conjectures: in the first, a coordinator computes the conjectures' parameters, while in the second the coordinator only communicates a target to the players, who compute their own conjectures. Both formulations share the same existence guarantee, and for the centralized approach we also prove that the equilibrium reached by players using the conjectures is indeed the one designed by the coordinator. Moreover, these formulations produce a set of independent optimization problems for the players, that can be solved in parallel for large scale simulations.
We tested our model on multiple games, that we selected to highlight specific problems: either the NE is worse than the social welfare, or the NE is hard to reach without coordination, or, even, there is no NE at all. In all cases our solutions (centralized and decentralized) improve the outcome for the players.

Finally, we discuss full decentralization as a possible extension, highlighting open issues and reflecting on a feedback approach that players could use to design conjectures while the coordinator has access only to its objective function. This would remove any knowledge the coordinator may have about the players objective functions. A second direction for future extension is to link our theoretical framework with multi-agent learning algorithm, as a way to provide a theoretical base for opponent modeling, while also tackling the issue of coordinating players. This application could resemble the centralized-training decentralized-execution paradigm, where the centralized part would be replaced by the conjectures design problem.

\bibliography{main}
\bibliographystyle{rlc}

\end{document}